\newtheorem{theorem}{Theorem}
\newtheorem{lemma}{Lemma}[section]
\newtheorem{corollary}{Corollary}
\newcommand\numberthis{\addtocounter{equation}{1}\tag{\theequation}}
\newcommand{\rndBrk}[1]{\left( #1 \right)}
\newcommand{\sqBrk}[1]{\left[ #1 \right]}
\newcommand{\curlyBrk}[1]{\left\lbrace #1 \right\rbrace}
\newcommand{\norm}[1]{\left\Vert #1 \right\Vert}
\DeclareMathOperator{\tr}{Tr}
\DeclareMathOperator{\Id}{\mathds{1}}
\begin{document}

% Use the \preprint command to place your local institutional report number 
% on the title page in preprint mode.
% Multiple \preprint commands are allowed.
%\preprint{}

\title{Uniform continuity bound for sandwiched R\'enyi conditional entropy}
\author{Ashutosh Marwah and Fr\'ed\'eric Dupuis}
\affil{\small{D\'epartement d'informatique et de recherche op\'erationnelle,\\ Universit\'e de Montr\'eal,\\ Montr\'eal QC, Canada}}

% Collaboration name, if desired (requires use of superscriptaddress option in \documentclass). 
% \noaffiliation is required (may also be used with the \author command).
%\collaboration{}
%\noaffiliation

\date{\today}

\maketitle %\maketitle must follow title, authors, abstract and \pacs

\begin{abstract}
    We prove a simple uniform continuity bound for the sandwiched R\'enyi conditional entropy for $\alpha \in [1/2, 1) \cup (1, \infty]$, which is independent of the dimension of the conditioning system. 
\end{abstract}

\section{Introduction}

Conditional entropies quantify the amount of uncertainty of a system when one has access to side information. R\'enyi conditional entropies have been instrumental in understanding both classical and quantum information theoretic processes in the one-shot domain~\cite{Arikan96, Renner05,Hayashi13, Dupuis20, Friedman18, Dupuis21}. There are multiple generalizations and definitions of R\'enyi conditional entropies for quantum states. One of the most important of these is the sandwiched R\'enyi conditional entropy~\cite{Lennert13,Wilde14} (which we denote as $\tilde{H}_{\alpha}^{\uparrow} (A|B)_{\rho}$ following the notation in Tomamichel's book, Quantum Information Processing with Finite Resources~\cite{TomamichelBook16}), which is defined for a quantum state $\rho_{AB}$, $\alpha \in [1/2, 1) \cup (1, \infty]$ and $\alpha' = (\alpha-1)/\alpha$ as 
\begin{align*}
    \tilde{H}_{\alpha}^{\uparrow} (A|B)_{\rho} &= \sup_{\eta_B} \frac{1}{1-\alpha} \log \tr \rndBrk{\eta_B^{-\alpha'/2} \rho_{AB} \eta_B^{-\alpha'/2}}^{\alpha}.
\end{align*}
where the supremum is taken over all quantum states $\eta_B$. For $\alpha = 1/2$, this definition results in $H_{\max}$, which characterizes the rate of compression for a probabilistic source~\cite{Konig09} and for $\alpha = \infty$, it results in $H_{\min}$, which has been used to characterize the amount of randomness which one can extract from a system independent of the conditioning system~\cite{Renner05}. Recently, the sandwiched conditional entropies with $\alpha \in (1, 2]$ have also been used to derive bounds for randomness extraction~\cite{Dupuis21}. \\

Many applications require that we are able to bound the conditional entropy of one state in terms of another state, which is close to the original state. For von Neumann conditional entropy, the Alicki-Fannes-Winter (AFW) uniform continuity bound~\cite{Alicki04,Winter16} enables us to do this. It states that for two quantum states $\rho_{AB}$ and $\sigma_{AB}$ such that $\frac{1}{2}\norm{\rho_{AB} - \sigma_{AB}}_1 \leq \epsilon$, we have 
\begin{align}
    \vert H(A|B)_{\rho} - H(A|B)_{\sigma}\vert \leq 2 \epsilon \log(d_A) + (1+\epsilon)h\left(\frac{\epsilon}{1+\epsilon} \right)
    \label{eq:OrgAFW}
\end{align}
where the function $h(x) = -x \log(x)- (1- x) \log(1-x)$ is the binary entropy and $d_A$ is the dimension of the vector space $A$. A key feature of this bound is that it is independent of the size $d_B$ of the conditioning system. This is crucial in many applications, particularly in cryptography, where the conditioning system often represents an adversary's system whose size cannot be bounded. The AFW bound has been used widely in quantum information theory (see, for example, the converse bounds for channel capacities~\cite{Wilde13}). \\

In this paper, we will prove a uniform continuity bound, like the AFW bound, for sandwiched R\'enyi conditional entropies. Specifically, in Theorem \ref{th:AFW_for_Renyi_leq_1} and Corollary \ref{cor:AFW_for_Renyi_geq_1}, we prove that for two quantum states $\rho_{AB}$ and $\sigma_{AB}$ such that $\frac{1}{2}\norm{\rho_{AB} - \sigma_{AB}}_1 \leq \epsilon$, and for $\alpha < 1$ we have
\begin{align*}
    \vert \tilde{H}_{\alpha}^{\uparrow} (A|B)_{\sigma} - \tilde{H}_{\alpha}^{\uparrow} (A|B)_{\rho} \vert \leq \log(1+\epsilon) + \frac{1}{1-\alpha} \log \rndBrk{1 + \epsilon^{\alpha} d_A^{2(1-\alpha)} - \frac{\epsilon}{(1+\epsilon)^{1-\alpha}} }
\end{align*}
and for $\alpha>1$, we have
\begin{align*}
    \vert \tilde{H}_{\alpha}^{\uparrow} (A|B)_{\sigma} &- \tilde{H}_{\alpha}^{\uparrow} (A|B)_{\rho} \vert \\
        &\leq \log(1+\sqrt{2\epsilon}) + \frac{1}{1-\beta} \log \rndBrk{1 + (\sqrt{2\epsilon})^{\beta} d_A^{2(1-\beta)} - \frac{\sqrt{2\epsilon}}{(1+\sqrt{2\epsilon})^{1-\beta}}} 
\end{align*}
where $\beta$ is such that $\alpha^{-1}+ \beta^{-1}=2$.\\

To prove these, we make use of a simple lemma about the subadditivity of the function $X \mapsto \tr(X^\alpha)$ for $\alpha<1$ and use the duality relation for sandwiched R\'enyi conditional entropy, which states that for a pure tripartite state $\rho_{ABC}$ and an $\alpha \in [\frac{1}{2}, 1) \cup (1, \infty]$, $\tilde{H}_\alpha^{\uparrow}(A|B)_{\rho} = -\tilde{H}_\beta^{\uparrow}(A|C)_{\rho}$ for $\beta$ such that $\alpha^{-1} + \beta^{-1} = 2$.\\

Jabbour and Datta~\cite{Jabbour20} proved the following tight uniform continuity bound for classical and quantum-classical states when $\alpha<1$:\footnote{Jabbour and Datta~\cite{Jabbour20} use the optimized Petz R\'enyi conditional entropy definition as the definition for the quantum R\'enyi conditional entropy. For classical states and quantum-classical states, though, their definition of conditional entropy is equal to the one considered in this paper.}
\begin{align}
    \vert \tilde{H}_{\alpha}^{\uparrow} (A|B)_{\sigma} - \tilde{H}_{\alpha}^{\uparrow} (A|B)_{\rho} \vert \leq \frac{1}{1-\alpha} \log \rndBrk{(1-\epsilon)^\alpha + \epsilon^\alpha (d_A-1)^{1-\alpha}}.
    \label{eq:Jabbour20_AFW_alpha_leq_1_cl}
\end{align}
The proof for Eq.~\eqref{eq:Jabbour20_AFW_alpha_leq_1_cl} uses conditional majorization~\cite{Gour18} and a series of transforms to tightly convert the probability distributions in question into distributions it is easier to argue about. The continuity bound proven in this paper for the R\'enyi conditional entropy of classical probability distributions (see Eq.~\eqref{eq:AFW_alpha_leq_1_cl} in Theorem~\ref{th:AFW_for_Renyi_leq_1}) is almost the same as Eq.~\eqref{eq:Jabbour20_AFW_alpha_leq_1_cl}. In addition, the proof of our main theorem (Theorem~\ref{th:AFW_for_Renyi_leq_1} below) is much simpler than the proof of Eq.~\eqref{eq:Jabbour20_AFW_alpha_leq_1_cl}. Also Leditzky et al.~\cite{Leditzky16} proved using H\"older's inequality that for $\alpha \in [\frac{1}{2}, 1)$
\begin{align*}
   \tilde{H}_{\alpha}^{\uparrow} (A|B)_{\rho} - \tilde{H}_{\beta}^{\uparrow} (A|B)_{\sigma} \geq \frac{2\alpha}{1-\alpha} \log F(\rho_{AB}, \sigma_{AB})
\end{align*}
where $\beta$ is such that $\alpha^{-1}+ \beta^{-1}=2$ and $F$ is the fidelity function. Although this inequality is not a continuity bound (since the Rényi parameter is different in the two terms), it can often be used as a continuity bound in applications where one of the states is particularly simple (e.g.~maximally entangled, or maximally mixed). (Also see \cite{Wang19,Eisert22} for related inequalities.)

\section{Uniform continuity bound for quantum R\'enyi entropies}

In the following, the dimension of a vector space $\mathcal{V}$ will be succinctly referred to as $d_{\mathcal{V}}$. Also, for Hermitian matrices $P$ and $Q$, we will use the notation $P \geq Q$ to denote that the matrix $P-Q$ is positive semidefinite. \\

We will use the sandwiched R\'enyi relative entropy~\cite{Lennert13,Wilde14} for our proof. For $\alpha \in [\frac{1}{2}, 1) \cup (1, \infty]$ and positive operators $P$ and $Q$ such that $\tr(P)\neq 0$, the sandwiched R\'enyi relative entropy of $P$ relative to $Q$ is defined as
\begin{align*}
    \tilde{D}_\alpha (P || Q) := 
    \begin{cases} 
        \frac{1}{\alpha-1} \log \frac{\tr \rndBrk{Q^{-\alpha'/2} P Q^{-\alpha'/2}}^\alpha}{\tr(P)} &\text{ if } (\alpha<1 \text{ and } P \nperp Q) \text{ or } \ker(Q) \subseteq \ker (P)\\
        \infty & \text{else.}
    \end{cases}
\end{align*}
For $\alpha \in [\frac{1}{2}, 1) \cup (1, \infty]$, the sandwiched R\'enyi relative entropy satisfies the data processing inequality, that is, for any completely positive and trace-preserving map $\Phi$ and positive operators $P$ and $Q$, we have 
\begin{align*}
    \tilde{D}_\alpha (P || Q) \geq \tilde{D}_\alpha (\Phi(P) || \Phi(Q)).
\end{align*}
We begin by stating a lemma about the subadditivity of the trace of matrices raised to the power of $\alpha \in [0,1]$. This was proven by McCarthy~\cite{McCarthy67}. We provide a proof here for completeness. 

\begin{lemma}[McCarthy's Inequality~\cite{McCarthy67}]
	For positive semidefinite matrices $P$ and $Q$ and $\alpha \in [0,1]$, we have 
    \begin{align*}
        \tr((P+Q)^{\alpha}) \leq \tr(P^{\alpha}) + \tr(Q^{\alpha})
    \end{align*}
    \label{lemm:alphaSubadditivity}
\end{lemma}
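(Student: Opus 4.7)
The plan is to rewrite $\tr((P+Q)^{\alpha})$ by splitting off a factor of $P+Q$ and then to bound each resulting piece using the Löwner-Heinz theorem. The starting identity is
\[
\tr((P+Q)^{\alpha}) = \tr((P+Q)^{\alpha-1}(P+Q)) = \tr((P+Q)^{\alpha-1} P) + \tr((P+Q)^{\alpha-1} Q),
\]
so the task reduces to showing $\tr((P+Q)^{\alpha-1} P) \leq \tr(P^{\alpha})$ and its $Q$-analogue.

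Before doing that, I would dispatch invertibility: if $P$ or $Q$ is singular, replace them by $P+\delta\mathds{1}$ and $Q+\delta\mathds{1}$, prove the inequality in the strictly positive case, and take $\delta \to 0$ using continuity of $X \mapsto \tr(X^{\alpha})$ on the positive semidefinite cone (valid for $\alpha \in (0,1]$). With $P, Q > 0$ the crucial step is to invoke Löwner-Heinz: $x \mapsto x^{1-\alpha}$ is operator monotone on $(0,\infty)$ because $1-\alpha \in [0,1]$. Thus $P \leq P+Q$ yields $P^{1-\alpha} \leq (P+Q)^{1-\alpha}$, and since inversion reverses order on positive operators, $(P+Q)^{\alpha-1} \leq P^{\alpha-1}$.

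Finally, since $P \geq 0$, the standard fact that $A \leq B$ and $C \geq 0$ imply $\tr(AC) \leq \tr(BC)$ (via $\tr((B-A)C) = \tr(C^{1/2}(B-A)C^{1/2}) \geq 0$) gives $\tr((P+Q)^{\alpha-1} P) \leq \tr(P^{\alpha-1} P) = \tr(P^{\alpha})$, and symmetrically for $Q$. Summing the two bounds produces the lemma; the endpoints $\alpha = 0$ (subadditivity of rank) and $\alpha = 1$ (equality of traces) are immediate by direct inspection.

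The only real conceptual hurdle is spotting the identity $(P+Q)^{\alpha} = (P+Q)^{\alpha-1}(P+Q)$: this rewriting converts the problem of controlling a concave function of $P+Q$ (where naive operator concavity of $x \mapsto x^{\alpha}$ points in the wrong direction) into an operator inequality at the exponent $\alpha-1 \in [-1,0]$, where Löwner-Heinz applies in exactly the useful direction. Once this trick is in hand, everything else is routine.
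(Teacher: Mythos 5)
Your proof is correct, and it takes a route that is a genuine (if closely related) variant of the paper's. The paper expands $t^\alpha$ via the integral representation $t^{\alpha} = \int_0^\infty \frac{t}{\lambda+t}\,d\mu(\lambda)$, splits $\tr\bigl((P+Q)(\lambda+P+Q)^{-1}\bigr)$ into its $P$- and $Q$-parts, and bounds each using only the elementary fact that $X \mapsto X^{-1}$ is operator anti-monotone; no invertibility assumption is needed there because $\lambda>0$ keeps every resolvent well defined. You perform the analogous split one level up, writing $(P+Q)^{\alpha} = (P+Q)^{\alpha-1}(P+Q)$ and bounding $(P+Q)^{\alpha-1} \leq P^{\alpha-1}$ directly; this requires the full L\"owner--Heinz theorem for $x \mapsto x^{1-\alpha}$ (whose standard proof runs through the very same integral representation the paper uses), plus the regularization $P \mapsto P+\delta\Id$ to make the negative power meaningful. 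Both proofs share the same essential trick --- peel off one linear factor of $P+Q$ and use operator (anti-)monotonicity to replace the remaining $(P+Q)$-dependent factor by its $P$- or $Q$-only counterpart --- but yours trades the integral representation for a named black-box theorem and an extra limiting argument, while the paper's stays self-contained and sidesteps the invertibility issue altogether. All the individual steps you cite (order reversal under inversion, $\tr(AC)\le\tr(BC)$ for $A\le B$ and $C\ge 0$, continuity of $X\mapsto\tr(X^{\alpha})$ on the positive semidefinite cone for $\alpha\in(0,1]$, and the separate treatment of the endpoints) are sound.
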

\begin{proof}
    First, note that the statement is trivial for $\alpha = 0$ and $\alpha = 1$. For the other values of $\alpha$, we will use the following integral representation of $t^{\alpha}$ where $t \in [0, \infty)$ and $\alpha \in (0,1)$ (see for example Example V.1.10 in Bhatia's Matrix Analysis~\cite{Bhatia97})
    \begin{align*}
        t^{\alpha} = \int_{0}^{\infty} \frac{t}{\lambda + t} d\mu(\lambda)
    \end{align*}
    where $\mu$ is a positive measure on $(0, \infty)$. For a positive semidefinite matrix $X$, and $\alpha \in (0,1)$ we can use this representation to write 
    \begin{align*}
        X^{\alpha} = \int_{0}^{\infty} {X}{(\lambda + X)^{-1}} d\mu(\lambda).
    \end{align*}
    Thus, we have 
    \begin{align*}
        \tr((P+Q)^{\alpha}) &= \int_{0}^{\infty} \tr \rndBrk{{(P+Q)}{(\lambda + P+Q)^{-1}}} d\mu(\lambda) \\
        & = \int_{0}^{\infty} \tr \rndBrk{{P}{(\lambda + P+Q)^{-1}}} d\mu(\lambda) + \int_{0}^{\infty} \tr \rndBrk{{Q}{(\lambda + P+Q)^{-1}}} d\mu(\lambda) \\
        & \leq \int_{0}^{\infty} \tr \rndBrk{{P}{(\lambda + P)^{-1}}} d\mu(\lambda) + \int_{0}^{\infty} \tr \rndBrk{{Q}{(\lambda + Q)^{-1}}} d\mu(\lambda)\\
        & = \tr(P^{\alpha} + Q^{\alpha})
    \end{align*}
    where for the inequality we use the fact that the function $X \mapsto X^{-1}$ is operator anti-monotone (see for example Proposition V.1.6 in Bhatia's Matrix Analysis~\cite{Bhatia97}).
\end{proof}

We will now use the lemma above to prove the following uniform continuity bound for sandwiched R\'enyi entropies with $\alpha < 1$. 

\begin{theorem}
    For $\alpha < 1$, $\epsilon \in [0,1]$ and bipartite normalized quantum states $\rho_{AB}$ and $\sigma_{AB}$ such that $\frac{1}{2}\norm{\rho_{AB} - \sigma_{AB}}_1 \leq \epsilon$, the difference in sandwiched R\'enyi conditional entropy for the two states can be bounded as 
    \begin{align}
        \vert \tilde{H}_{\alpha}^{\uparrow} (A|B)_{\sigma} - \tilde{H}_{\alpha}^{\uparrow} (A|B)_{\rho} \vert \leq \log(1+\epsilon) + \frac{1}{1-\alpha} \log \rndBrk{1 + \epsilon^{\alpha} d_A^{2(1-\alpha)} - \frac{\epsilon}{(1+\epsilon)^{1-\alpha}} }.
        \label{eq:AFW_alpha_leq_1}
    \end{align}
    If, in addition, system $A$ is classical for both $\rho_{AB}$ and $\sigma_{AB}$ or system $B$ is classical for both states, then we can strengthen the bound to 
    \begin{align}
        \vert \tilde{H}_{\alpha}^{\uparrow} (A|B)_{\sigma} - \tilde{H}_{\alpha}^{\uparrow} (A|B)_{\rho} \vert \leq \log(1+\epsilon) + \frac{1}{1-\alpha} \log \rndBrk{1 + \epsilon^{\alpha} d_A^{1-\alpha} - \frac{\epsilon}{\rndBrk{d_A(1+\epsilon)}^{1-\alpha}} }.
        \label{eq:AFW_alpha_leq_1_cl}
    \end{align}
    \label{th:AFW_for_Renyi_leq_1}
\end{theorem}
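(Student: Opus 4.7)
The plan is to combine McCarthy's inequality (Lemma~\ref{lemm:alphaSubadditivity}) with the Jordan decomposition of $\sigma-\rho$ and a carefully chosen auxiliary unnormalized positive operator. Write $\sigma-\rho=\Delta_+-\Delta_-$ with $\Delta_\pm\geq 0$ on orthogonal supports and $\tr(\Delta_\pm)=:\epsilon'\leq\epsilon$, and set $T:=\sigma+\Delta_+=\rho+\Delta_-$, so that $\tr T=1+\epsilon'$ and $\tau:=T/(1+\epsilon')$ is a normalized state satisfying $\sigma\leq(1+\epsilon')\tau$. Abbreviate $\tilde{Q}_\alpha(X;\eta_B):=\tr((\eta_B^{-\alpha'/2}X\eta_B^{-\alpha'/2})^\alpha)$; this quantity is homogeneous of degree $\alpha$ in $X$, monotone in $X\geq 0$ (since $x\mapsto x^\alpha$ is operator monotone for $\alpha\in[0,1]$), subadditive in $X\geq 0$ by Lemma~\ref{lemm:alphaSubadditivity}, and satisfies $\sup_{\eta_B}\tilde{Q}_\alpha(\omega;\eta_B)=2^{(1-\alpha)\tilde{H}_\alpha^\uparrow(A|B)_\omega}$ for every state $\omega$.

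Fix $\eta_B^\star$ to be optimal for $\sigma$. Chaining monotonicity of $\tilde{Q}_\alpha$ applied to $\sigma\leq(1+\epsilon')\tau$, homogeneity, the identity $(1+\epsilon')\tau=\rho+\Delta_-$, and McCarthy's inequality yields
\[
2^{(1-\alpha)\tilde{H}_\alpha^\uparrow(A|B)_\sigma}=\tilde{Q}_\alpha(\sigma;\eta_B^\star)\leq(1+\epsilon')^\alpha\tilde{Q}_\alpha(\tau;\eta_B^\star)=\tilde{Q}_\alpha(\rho+\Delta_-;\eta_B^\star)\leq\tilde{Q}_\alpha(\rho;\eta_B^\star)+\tilde{Q}_\alpha(\Delta_-;\eta_B^\star).
\]
Since $\Delta_-/\epsilon'$ is a state and $\tilde{H}_\alpha^\uparrow(A|B)\leq\log d_A$ uniformly on states, $\tilde{Q}_\alpha(\Delta_-;\eta_B^\star)\leq(\epsilon')^\alpha d_A^{1-\alpha}\leq\epsilon^\alpha d_A^{1-\alpha}$; combined with $\tilde{Q}_\alpha(\rho;\eta_B^\star)\leq 2^{(1-\alpha)\tilde{H}_\alpha^\uparrow(A|B)_\rho}$ this produces the additive estimate
\[
2^{(1-\alpha)\tilde{H}_\alpha^\uparrow(A|B)_\sigma}\leq 2^{(1-\alpha)\tilde{H}_\alpha^\uparrow(A|B)_\rho}+\epsilon^\alpha d_A^{1-\alpha}.
\]

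To turn this additive estimate into the announced bound on $\tilde{H}_\alpha^\uparrow(A|B)_\sigma-\tilde{H}_\alpha^\uparrow(A|B)_\rho$, I would use the universal lower bound $\tilde{H}_\alpha^\uparrow(A|B)_\rho\geq -\log d_A$ in the denominator of the resulting log-ratio, which introduces the factor $d_A^{2(1-\alpha)}$; in the classical case, the sharper $\tilde{H}_\alpha^\uparrow(A|B)_\rho\geq 0$ replaces $d_A^{2(1-\alpha)}$ by $d_A^{1-\alpha}$, yielding \eqref{eq:AFW_alpha_leq_1_cl}. The reverse inequality $\tilde{H}_\alpha^\uparrow(A|B)_\rho-\tilde{H}_\alpha^\uparrow(A|B)_\sigma\leq\cdots$ follows by interchanging $\rho$ and $\sigma$ and using $\Delta_+$ in place of $\Delta_-$, since all the steps above are symmetric in the roles of $\rho$ and $\sigma$. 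The main obstacle will be the final algebraic step: retaining the factors $(1+\epsilon')^\alpha$ carefully through the final logarithm so that the bound takes exactly the announced shape $\log(1+\epsilon)+\frac{1}{1-\alpha}\log\bigl(1+\epsilon^\alpha d_A^{2(1-\alpha)}-\epsilon/(1+\epsilon)^{1-\alpha}\bigr)$, rather than the slightly looser ``clean'' form $\frac{1}{1-\alpha}\log(1+\epsilon^\alpha d_A^{2(1-\alpha)})$ that drops both the $\log(1+\epsilon)$ term and the subtraction.
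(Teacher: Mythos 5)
Your first half — the Jordan decomposition, the subadditivity of $X \mapsto \tr\rndBrk{\eta_B^{-\alpha'/2} X \eta_B^{-\alpha'/2}}^{\alpha}$ via McCarthy's inequality, and the resulting estimate $2^{(1-\alpha)\tilde{H}_{\alpha}^{\uparrow}(A|B)_{\sigma}} \leq 2^{(1-\alpha)\tilde{H}_{\alpha}^{\uparrow}(A|B)_{\rho}} + \epsilon^{\alpha} d_A^{1-\alpha}$ — is correct and matches the paper's upper-bound step (Eq.~\eqref{eq:intermediate_upper_bound}), up to a harmless swap of $\Delta_+$ and $\Delta_-$ in your definition of $T$. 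However, there is a genuine gap in the second half, and it is not the ``algebraic bookkeeping'' issue you anticipate. Your chain begins by discarding $\Delta$ entirely via the monotonicity step $\tilde{Q}_{\alpha}(\sigma;\eta_B^{\star}) \leq \tilde{Q}_{\alpha}(\sigma+\Delta;\eta_B^{\star})$; the factor $(1+\epsilon')^{\alpha}$ you introduce through $\tau$ cancels immediately by homogeneity and carries no information. What you end up with is only the weaker bound $\frac{1}{1-\alpha}\log\rndBrk{1+\epsilon^{\alpha} d_A^{2(1-\alpha)}}$, and this does \emph{not} imply the stated Eq.~\eqref{eq:AFW_alpha_leq_1}: the required inequality is equivalent to $\bigl((1+\epsilon)^{1-\alpha}-1\bigr)\bigl(1+\epsilon^{\alpha}d_A^{2(1-\alpha)}\bigr) \geq \epsilon$, which fails for $\alpha$ close to $1$. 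Indeed your bound diverges as $\alpha \to 1^-$ (since $\log(1+\epsilon^{\alpha}d_A^{2(1-\alpha)}) \to \log(1+\epsilon) > 0$ while $1/(1-\alpha) \to \infty$), whereas the theorem's bound converges to the AFW bound. The $-\epsilon/(1+\epsilon)^{1-\alpha}$ subtraction and the $\log(1+\epsilon)$ prefactor are therefore essential, not cosmetic.

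The missing ingredient is a nontrivial \emph{lower} bound on $\sup_{\eta_B}\tr\rndBrk{\eta_B^{-\alpha'/2}(\sigma_{AB}+\epsilon P_{AB})\eta_B^{-\alpha'/2}}^{\alpha}$ in terms of a weighted sum of the contributions of $\sigma$ and $P$ separately. The paper obtains this by introducing near-optimizers $\mu_B$ and $\nu_B$ for $\sigma_{AB}$ and $P_{AB}$, forming the flagged states $\bar{\sigma}_{ABC} = \frac{1}{1+\epsilon}\sigma_{AB}\otimes\ket{0}\bra{0}_C + \frac{\epsilon}{1+\epsilon}P_{AB}\otimes\ket{1}\bra{1}_C$ and $\theta_{BC}$ analogously, and applying the data-processing inequality for $\tilde{D}_{\alpha}$ under $\tr_C$ to get
\begin{align*}
\tr\rndBrk{\theta_B^{-\alpha'/2}(\sigma_{AB}+\epsilon P_{AB})\theta_B^{-\alpha'/2}}^{\alpha} \geq (1+\epsilon)^{\alpha-1}\sqBrk{2^{(1-\alpha)\tilde{H}_{\alpha}^{\uparrow}(A|B)_{\sigma}} + \epsilon\, 2^{(1-\alpha)\tilde{H}_{\alpha}^{\uparrow}(A|B)_{P}}},
\end{align*}
which, combined with the upper bound and $\tilde{H}_{\alpha}^{\uparrow}(A|B)_{P} \geq -\log d_A$, produces exactly the stated form. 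You cannot recover this superadditivity-type step within your framework by ``retaining the factors $(1+\epsilon')^{\alpha}$ carefully,'' because your very first inequality already throws away the $\epsilon\,2^{(1-\alpha)\tilde{H}_{\alpha}^{\uparrow}(A|B)_{P}}$ term. (Your remarks on the classical strengthening and on symmetry in $\rho \leftrightarrow \sigma$ are fine, but they are contingent on this missing step.)
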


\begin{proof}
    For $\alpha<1$, we can write the sandwiched R\'enyi conditional entropy as 
    \begin{align}
        2^{(1-\alpha)\tilde{H}_{\alpha}^{\uparrow} (A|B)_{\rho}} = \sup_{\eta_B} \tr \rndBrk{\eta_B^{-\alpha'/2} \rho_{AB} \eta_B^{-\alpha'/2}}^{\alpha}.
        \label{eq:H_alpha_exp_form}
    \end{align}
    This form will turn out to be useful later.\\

    We can assume that $\epsilon>0$ as otherwise the bound is trivial. Since the upper bound in the theorem is increasing in the trace distance, it is sufficient to prove the bound for $\norm{\rho_{AB} - \sigma_{AB}}_1 = 2\epsilon$. Let $\rho_{AB} - \sigma_{AB} = P'_{AB} - Q'_{AB}$ be the decomposition of $\rho_{AB} - \sigma_{AB}$ into its positive and negative parts, so that $P'_{AB} \geq 0$ and $Q'_{AB} \geq 0$. Note that $\tr(P'_{AB}) = \tr(Q'_{AB}) = \epsilon$. Thus, we can further define the density operators $P_{AB} := P'_{AB}/ \epsilon$ and $Q_{AB} := Q'_{AB}/ \epsilon$. \\

    Now, we have that $\sigma_{AB} + \epsilon P_{AB} = \rho_{AB} + \epsilon Q_{AB}$. For any positive semidefinite operator $\eta_B$, we have the following chain of implications
    \begin{align*}
        & \sigma_{AB} + \epsilon P_{AB} = \rho_{AB} + \epsilon Q_{AB} \\
        \Rightarrow &\ \eta_B^{-\alpha'/2} \rndBrk{\sigma_{AB} + \epsilon P_{AB}}  \eta_B^{-\alpha'/2} = \eta_B^{-\alpha'/2} \rho_{AB}\ \eta_B^{-\alpha'/2} + \epsilon \eta_B^{-\alpha'/2} Q_{AB}\ \eta_B^{-\alpha'/2} \\
        \Rightarrow & \tr \rndBrk{\eta_B^{-\alpha'/2} \rndBrk{\sigma_{AB} + \epsilon P_{AB}} \eta_B^{-\alpha'/2}}^\alpha = \tr \rndBrk{\eta_B^{-\alpha'/2} \rho_{AB}\ \eta_B^{-\alpha'/2}+ \epsilon \eta_B^{-\alpha'/2} Q_{AB}\ \eta_B^{-\alpha'/2}}^\alpha \\
        \Rightarrow & \tr \rndBrk{\eta_B^{-\alpha'/2} \rndBrk{\sigma_{AB} + \epsilon P_{AB}} \eta_B^{-\alpha'/2}}^\alpha \leq \tr \rndBrk{\eta_B^{-\alpha'/2}\rho_{AB}\eta_B^{-\alpha'/2}}^\alpha + \epsilon^\alpha \tr \rndBrk{ \eta_B^{-\alpha'/2} Q_{AB} \eta_B^{-\alpha'/2}}^\alpha
    \end{align*}
    where we use Lemma \ref{lemm:alphaSubadditivity} in the third step. Taking the supremum on the right-hand side, we get
    \begin{align*}
        \tr \rndBrk{\eta_B^{-\alpha'/2} \rndBrk{\sigma_{AB} + \epsilon P_{AB}} \eta_B^{-\alpha'/2}}^\alpha & \leq \sup_{\omega_B} \curlyBrk{ \tr \rndBrk{\omega_B^{-\alpha'/2}\rho_{AB}\omega_B^{-\alpha'/2}}^\alpha + \epsilon^\alpha \tr \rndBrk{ \omega_B^{-\alpha'/2} Q_{AB} \omega_B^{-\alpha'/2}}^\alpha } \\
        & \leq \sup_{\omega_B} \curlyBrk{ \tr \rndBrk{\omega_B^{-\alpha'/2}\rho_{AB}\omega_B^{-\alpha'/2}}^\alpha } + \epsilon^\alpha \sup_{\omega_B} \curlyBrk{\tr \rndBrk{ \omega_B^{-\alpha'/2} Q_{AB} \omega_B^{-\alpha'/2}}^\alpha }.
    \end{align*}
    Using Eq.~\eqref{eq:H_alpha_exp_form}, we derive that for every state $\eta_B$
    \begin{align*}
        \tr \rndBrk{\eta_B^{-\alpha'/2} \rndBrk{\sigma_{AB} + \epsilon P_{AB}} \eta_B^{-\alpha'/2}}^\alpha &\leq 2^{(1-\alpha)\tilde{H}_{\alpha}^{\uparrow} (A|B)_{\rho}} + \epsilon^\alpha 2^{(1-\alpha)\tilde{H}_{\alpha}^{\uparrow} (A|B)_{Q}}\\
        & \leq 2^{(1-\alpha)\tilde{H}_{\alpha}^{\uparrow} (A|B)_{\rho}} + \epsilon^\alpha d_A^{1-\alpha}
        \numberthis \label{eq:intermediate_upper_bound}
    \end{align*}
    where we have used $\tilde{H}_{\alpha}^{\uparrow} (A|B)_{Q} \leq \log(d_A)$ to derive the last line. Now, we will try to lower bound the term on the left-hand side. For an arbitrary $\delta>0$, there exist states $\mu_B$ and $\nu_B$ such that
    \begin{align}
        2^{(1-\alpha)\tilde{H}_{\alpha}^{\uparrow} (A|B)_{\sigma}} &\leq \tr \rndBrk{\mu_B^{-\alpha'/2} \sigma_{AB} \mu_B^{-\alpha'/2}}^{\alpha} +\delta \label{eq:def_mu}\\
        2^{(1-\alpha)\tilde{H}_{\alpha}^{\uparrow} (A|B)_{P}} &\leq \tr \rndBrk{\nu_B^{-\alpha'/2} P_{AB} \nu_B^{-\alpha'/2}}^{\alpha} +\delta . \label{eq:def_nu}
    \end{align}
    We use these states to define the state $\theta_{BC}$ as
    \begin{align*}
        \theta_{BC} := \frac{1}{1+\epsilon} \mu_B \otimes \ket{0}\bra{0}_C + \frac{\epsilon}{1+\epsilon} \nu_B \otimes \ket{1}\bra{1}_C.
    \end{align*} 
    Also, define 
    \begin{align*}
        \bar{\sigma}_{ABC} := \frac{1}{1+\epsilon} \sigma_{AB} \otimes \ket{0}\bra{0}_C + \frac{\epsilon}{1+\epsilon} P_{AB} \otimes \ket{1}\bra{1}_C.
    \end{align*}
    With these definitions, we have the following chain of inequalities
    \begin{align*}
        \tr \left( \theta_B^{\frac{-\alpha'}{2}} (\sigma_{AB} + \epsilon P_{AB}) \theta_B^{\frac{-\alpha'}{2}} \right)^\alpha &= (1+\epsilon)^\alpha \tr \left( \theta_B^{\frac{-\alpha'}{2}} \bar{\sigma}_{AB} \theta_B^{\frac{-\alpha'}{2}} \right)^{\alpha} \\
        &= (1+\epsilon)^\alpha 2^{-(1-\alpha) \tilde{D}_\alpha(\bar{\sigma}_{AB}|| \Id_A \otimes \theta_B)}\\ 
        &\geq (1+\epsilon)^\alpha 2^{-(1-\alpha) \tilde{D}_\alpha(\bar{\sigma}_{ABC}|| \Id_A \otimes \theta_{BC})}\\ 
        &= (1+\epsilon)^{\alpha-1} \tr \rndBrk{\mu_B^{-\alpha'/2} \sigma_{AB} \mu_B^{-\alpha'/2}}^{\alpha} + \\
        & \hspace{2.5 cm} \epsilon(1+\epsilon)^{\alpha-1} \tr \rndBrk{\nu_B^{-\alpha'/2} P_{AB} \nu_B^{-\alpha'/2}}^{\alpha} \\
        &\geq (1+\epsilon)^{\alpha-1} \sqBrk{2^{(1-\alpha) \tilde{H}_{\alpha}^{\uparrow}(A|B)_{\sigma}} + \epsilon 2^{(1-\alpha) \tilde{H}_{\alpha}^{\uparrow}(A|B)_{P}}} \\
        & \hspace{2.5 cm} - (1+\epsilon)^{\alpha} \delta \\
        &\geq (1+\epsilon)^{\alpha-1} \left[ 2^{(1-\alpha) \tilde{H}_{\alpha}^{\uparrow}(A|B)_{\sigma}} + \epsilon d_A^{\alpha-1} \right]\\
        & \hspace{2.5 cm} - (1+\epsilon)^{\alpha} \delta 
    \end{align*}
    where we use the data processing inequality for $\tilde{D}_\alpha$ for the first inequality, the definitions of the states $\mu_B$ and $\nu_B$ (Eq.~\eqref{eq:def_mu}, \eqref{eq:def_nu}) in the second last step and the fact that $\tilde{H}_{\alpha}^{\uparrow}(A|B)_{P} \geq -\log(d_A)$ in the last step. Combining this inequality with the upper bound in Eq.~\eqref{eq:intermediate_upper_bound} yields
    \begin{align*}
        2^{(1-\alpha) \tilde{H}_{\alpha}^{\uparrow}(A|B)_{\sigma}} &\leq (1+\epsilon)^{1-\alpha} 2^{(1-\alpha) \tilde{H}_{\alpha}^{\uparrow}(A|B)_{\rho}} + (1+\epsilon)^{1-\alpha} \epsilon^{\alpha} d_A^{1-\alpha} - \epsilon d_A^{\alpha-1} + (1+\epsilon)\delta.
    \end{align*}
    Since $\delta$ is arbitrary we can let $\delta \rightarrow 0$. Further, taking the logarithm on both sides and dividing by $1-\alpha >0$, we get
    \begin{align*}
        \tilde{H}_{\alpha}^{\uparrow} (A|B)_{\sigma} &\leq \frac{1}{1-\alpha} \log \rndBrk{(1+\epsilon)^{1-\alpha} 2^{(1-\alpha) \tilde{H}_{\alpha}^{\uparrow}(A|B)_{\rho}} + (1+\epsilon)^{1-\alpha} \epsilon^{\alpha} d_A^{1-\alpha} - \epsilon d_A^{\alpha-1}}
    \end{align*}
    We can now bound the difference of the conditional entropies as 
    \begin{align*}
        \tilde{H}_{\alpha}^{\uparrow} (A|B)_{\sigma} - \tilde{H}_{\alpha}^{\uparrow} (A|B)_{\rho} &\leq \frac{1}{1-\alpha} \log \rndBrk{(1+\epsilon)^{1-\alpha} 2^{(1-\alpha) \tilde{H}_{\alpha}^{\uparrow}(A|B)_{\rho}} + (1+\epsilon)^{1-\alpha} \epsilon^{\alpha} d_A^{1-\alpha} - \epsilon d_A^{\alpha-1}} \\
        & \hspace{2.5 cm} -\tilde{H}_{\alpha}^{\uparrow} (A|B)_{\rho} \\
        &= \frac{1}{1-\alpha} \log \rndBrk{(1+\epsilon)^{1-\alpha} + \rndBrk{(1+\epsilon)^{1-\alpha} \epsilon^{\alpha} d_A^{1-\alpha} - \epsilon d_A^{\alpha-1}}2^{-(1-\alpha) \tilde{H}_{\alpha}^{\uparrow}(A|B)_{\rho}}}\\
        &\leq \frac{1}{1-\alpha} \log \rndBrk{(1+\epsilon)^{1-\alpha} + \rndBrk{(1+\epsilon)^{1-\alpha} \epsilon^{\alpha} d_A^{1-\alpha} - \epsilon d_A^{\alpha-1}}d_A^{1-\alpha}} \\
        &= \log(1+\epsilon) + \frac{1}{1-\alpha} \log \rndBrk{1 + \epsilon^{\alpha} d_A^{2(1-\alpha)} - \frac{\epsilon}{(1+\epsilon)^{1-\alpha}} }
    \end{align*}
    where we used the fact that $(1+\epsilon)^{1-\alpha} \epsilon^{\alpha} d_A^{1-\alpha} \geq \epsilon d_A^{\alpha-1}$ since $d_A \geq 1$, which implies that the right-hand side in the second line is a decreasing function in $\tilde{H}_{\alpha}^{\uparrow} (A|B)_{\rho}$, which we again lower bound by $-\log(d_A)$ (when $A$ or $B$ is classical, we can use $\tilde{H}_{\alpha}^{\uparrow} (A|B)_{\rho} \geq 0$ and derive the stronger bound in Eq.~\eqref{eq:AFW_alpha_leq_1_cl}). Since we did not use any specific property of $\rho_{AB}$ or $\sigma_{AB}$, one can derive the same upper bound by switching $\rho_{AB}$ and $\sigma_{AB}$.
\end{proof}

\noindent It is worth noting that in the limit $\alpha \rightarrow 1$, the bound in Eq.~\eqref{eq:AFW_alpha_leq_1} yields the bound
\begin{align*}
    \vert H(A|B)_{\rho} - H(A|B)_{\sigma} \vert \leq 2 \epsilon \log(d_A) + (1+ \epsilon) \log(1+ \epsilon) - \epsilon \log(\epsilon)
\end{align*}
which is equal to the tight continuity bound for conditional von Neumann entropy (Eq. \ref{eq:OrgAFW}) proven by Winter \cite{Winter16}. This indicates that our bound is close to being tight in the regime where $\alpha$ is close to $1$ and $\alpha<1$.\\

We will now prove a uniform continuity bound for sandwiched R\'enyi conditional entropies with $\alpha>1$ as a corollary to the above theorem by making use of the duality relation for these entropies. Recall that according to the duality relation~\cite{Lennert13,Beigi13} for $\tilde{H}_{\alpha}^{\uparrow} (A|B)_{\rho}$, we have that for a pure state $\rho_{ABC}$
\begin{align*}
    \tilde{H}_{\alpha}^{\uparrow} (A|B)_{\rho} = - \tilde{H}_{\beta}^{\uparrow} (A|C)_{\rho} 
\end{align*}
for $\alpha, \beta \in [1/2,1) \cup (1, \infty]$ satisfying $\alpha^{-1} + \beta^{-1} = 2$. Note that for $\alpha \in (1, \infty]$, the $\beta$ satisfying the duality condition lies in $[1/2, 1)$.\\

\begin{corollary}
    For $\alpha > 1$, $\epsilon \in [0,1]$ and bipartite normalized quantum states $\rho_{AB}$ and $\sigma_{AB}$ such that $\frac{1}{2}\norm{\rho_{AB} - \sigma_{AB}}_1 \leq \epsilon$, the difference in sandwiched R\'enyi conditional entropy for the two states can be bounded as 
    \begin{align*}
        \vert \tilde{H}_{\alpha}^{\uparrow} (A|B)_{\sigma} &- \tilde{H}_{\alpha}^{\uparrow} (A|B)_{\rho} \vert \\
        &\leq \log(1+\sqrt{2\epsilon}) + \frac{1}{1-\beta} \log \rndBrk{1 + (\sqrt{2\epsilon})^{\beta} d_A^{2(1-\beta)} - \frac{\sqrt{2\epsilon}}{(1+\sqrt{2\epsilon})^{1-\beta}}} 
        \numberthis
        \label{eq:AFW_alpha_geq_1}
    \end{align*}
    for $\beta= \frac{\alpha}{2\alpha-1}$ defined such that $\alpha^{-1} + \beta^{-1} = 2$.
    \label{cor:AFW_for_Renyi_geq_1}
\end{corollary}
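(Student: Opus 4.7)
The plan is to derive the $\alpha>1$ bound from Theorem~\ref{th:AFW_for_Renyi_leq_1} by invoking the duality relation $\tilde{H}_{\alpha}^{\uparrow}(A|B)_\rho = -\tilde{H}_{\beta}^{\uparrow}(A|C)_\rho$ on suitably chosen purifications. First, I would introduce purifications $\rho_{ABC}$ and $\sigma_{ABC}$ of $\rho_{AB}$ and $\sigma_{AB}$ on a common system $C$ (taking $d_C = d_A d_B$ suffices). By Uhlmann's theorem, the purifications can be chosen so that $F(\rho_{ABC}, \sigma_{ABC}) = F(\rho_{AB}, \sigma_{AB})$, and the latter is at least $1-\epsilon$ by the Fuchs--van de Graaf inequality applied to the hypothesis $\frac{1}{2}\|\rho_{AB} - \sigma_{AB}\|_1 \leq \epsilon$.

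Next, I would convert the fidelity bound on the purifications back into a trace-distance bound. Using $\frac{1}{2}\|\rho_{ABC} - \sigma_{ABC}\|_1 \leq \sqrt{1 - F(\rho_{ABC}, \sigma_{ABC})^2}$ and $F \geq 1-\epsilon$, this yields
\begin{align*}
    \tfrac{1}{2}\|\rho_{ABC} - \sigma_{ABC}\|_1 \leq \sqrt{1 - (1-\epsilon)^2} = \sqrt{2\epsilon - \epsilon^2} \leq \sqrt{2\epsilon}.
\end{align*}
Note that this bound holds a fortiori after the partial trace over $B$, so $\frac{1}{2}\|\rho_{AC} - \sigma_{AC}\|_1 \leq \sqrt{2\epsilon}$ as well.

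Now I would apply the duality relation: for $\alpha \in (1, \infty]$ and the conjugate $\beta = \alpha/(2\alpha-1) \in [1/2, 1)$, we have $\tilde{H}_{\alpha}^{\uparrow}(A|B)_\rho = -\tilde{H}_{\beta}^{\uparrow}(A|C)_\rho$ and likewise for $\sigma$. Hence
\begin{align*}
    \bigl| \tilde{H}_{\alpha}^{\uparrow}(A|B)_\sigma - \tilde{H}_{\alpha}^{\uparrow}(A|B)_\rho \bigr| = \bigl| \tilde{H}_{\beta}^{\uparrow}(A|C)_\sigma - \tilde{H}_{\beta}^{\uparrow}(A|C)_\rho \bigr|.
\end{align*}
Since $\beta < 1$ and the trace distance between $\rho_{AC}$ and $\sigma_{AC}$ is at most $\sqrt{2\epsilon}$, Theorem~\ref{th:AFW_for_Renyi_leq_1} applied with parameter $\beta$ and distance $\sqrt{2\epsilon}$ delivers exactly the right-hand side of Eq.~\eqref{eq:AFW_alpha_geq_1}. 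The dimension appearing in the Theorem~\ref{th:AFW_for_Renyi_leq_1} bound is still $d_A$, which is crucial and matches the claimed inequality.

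The only subtle points are ensuring that the purification step is legitimate (which it is, via Uhlmann and the freedom to choose $C$ large enough) and checking that the Fuchs--van de Graaf conversion indeed gives $\sqrt{2\epsilon}$ rather than a worse constant; I do not foresee any serious obstacle beyond these bookkeeping details.
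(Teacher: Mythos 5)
Your proposal is correct and follows essentially the same route as the paper: Fuchs--van de Graaf plus Uhlmann to obtain purifications with $\frac{1}{2}\norm{\rho_{ABC}-\sigma_{ABC}}_1 \leq \sqrt{2\epsilon}$, then the duality relation $\tilde{H}_{\alpha}^{\uparrow}(A|B) = -\tilde{H}_{\beta}^{\uparrow}(A|C)$ and an application of Theorem~\ref{th:AFW_for_Renyi_leq_1} at parameter $\beta$. The extra bookkeeping you flag (choice of $d_C$, the $\sqrt{2\epsilon-\epsilon^2}\leq\sqrt{2\epsilon}$ step) is handled exactly as you describe.
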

\begin{proof}
    If $\frac{1}{2} \norm{\rho_{AB} - \sigma_{AB}}_1 \leq \epsilon$, then using the Fuchs-van de Graaf inequality along with Uhlmann's theorem, we can find purifications $\rho_{ABC}$ and $\sigma_{ABC}$ such that 
    \begin{align*}
        \frac{1}{2}\norm{\rho_{ABC} - \sigma_{ABC}}_1 \leq \sqrt{2\epsilon}.
    \end{align*}
    For $\alpha>1$, choose $\beta <1 $ such that $\alpha^{-1} + \beta^{-1} = 2$, then using the duality relation and Theorem \ref{th:AFW_for_Renyi_leq_1} we have that 
    \begin{align*}
        \vert \tilde{H}_{\alpha}^{\uparrow} (A|B)_{\sigma} - \tilde{H}_{\alpha}^{\uparrow} (A|B)_{\rho} \vert &= \vert \tilde{H}_{\beta}^{\uparrow} (A|C)_{\sigma} - \tilde{H}_{\beta}^{\uparrow} (A|C)_{\rho} \vert \\
        &\leq \log(1+\sqrt{2\epsilon}) + \frac{1}{1-\beta} \log \rndBrk{1 + (\sqrt{2\epsilon})^{\beta} d_A^{2(1-\beta)} - \frac{\sqrt{2\epsilon}}{(1+\sqrt{2\epsilon})^{1-\beta}} }
    \end{align*}
\end{proof}
We expect that the bound in Eq.~\eqref{eq:AFW_alpha_geq_1} is not tight in the constants and the exponents. For $H_{\min} = \tilde{H}_{\infty}^{\uparrow}$, the corollary above gives us the bound
\begin{align*}
    \vert {H}_{\min} (A|B)_{\sigma} - {H}_{\min}  (A|B)_{\rho} \vert \leq \log(1+\sqrt{2\epsilon}) + 2 \log \rndBrk{1 + (2\epsilon)^{1/4} d_A - \frac{\sqrt{2\epsilon}}{(1+\sqrt{2\epsilon})^{1/2}}} 
\end{align*}
It is easy to prove a tighter uniform continuity bound with an upper bound of $\log(1+\epsilon d_A^2)$ for $H_{\min}$. We prove it in the Appendix as Theorem \ref{th:AFW_min_entropy}. Nevertheless, the bound above is independent of the conditioning system, and it should be sufficient for most applications. 

\section{Conclusion}

We proved a simple uniform continuity bound for the sandwiched R\'enyi conditional entropy. Further work should look at improving these bounds, especially for $\alpha>1$. Deriving continuity bounds for other R\'enyi conditional entropies also remains an open problem. 

% If you have acknowledgments, this puts in the proper section head.
\section*{Acknowledgements}
We would like to thank Mark Wilde for his comments and for pointing us to work related to this paper. AM is supported by bourse d'excellence Google and would like to thank Google for their financial support. This research was also supported by Canada's NSERC.

\appendix

\section{Uniform continuity bound for $H_{\min}$}
\begin{sloppypar}
In the following theorem, we prove a simple and tighter uniform continuity bound for $H_{\min} = \tilde{H}_{\infty}^{\uparrow}$. Recall that $H_{\min}$ can be equivalently defined as
\begin{align*}
    H_{\min}(A|B)_{\rho} := \max \curlyBrk{\lambda \in \mathbb{R} : \text{there exists $\eta_B$ such that } \rho_{AB} \leq 2^{- \lambda} \Id_A \otimes \eta_B}.
\end{align*}
\begin{theorem}
    For $\epsilon \in [0,1]$ and bipartite normalized quantum states $\rho_{AB}$ and $\sigma_{AB}$ such that $\frac{1}{2}\norm{\rho_{AB} - \sigma_{AB}}_1 \leq \epsilon$, the difference in the min-entropy for the two states can be bounded as 
    \begin{align}
        \vert H_{\min} (A|B)_{\sigma} - H_{\min} (A|B)_{\rho} \vert \leq \log(1+\epsilon d_A^2).
        \label{eq:AFW_min_entropy}
    \end{align}
    \label{th:AFW_min_entropy}
\end{theorem}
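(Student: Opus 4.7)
The plan is to reduce the bound to a simple additive perturbation estimate using the SDP characterization of the min-entropy already present in the statement of the theorem. Rewriting the definition as
$$ 2^{-H_{\min}(A|B)_{\rho}} = \min\curlyBrk{\tr(\eta_B) : \eta_B \geq 0,\ \rho_{AB} \leq \Id_A \otimes \eta_B}, $$
I would fix an optimal $\eta^{\rho}_B$ with $\tr(\eta^{\rho}_B) = 2^{-H_{\min}(A|B)_{\rho}}$. Then, as in the proof of Theorem~\ref{th:AFW_for_Renyi_leq_1}, decompose $\sigma_{AB} - \rho_{AB} = P_{AB} - Q_{AB}$ into positive and negative parts, so that $P, Q \geq 0$ and $\tr(P) = \tr(Q) \leq \epsilon$; since $Q \geq 0$ this gives $\sigma_{AB} \leq \rho_{AB} + P_{AB}$, and the task reduces to producing a positive $\eta^{P}_B$ of small trace with $P_{AB} \leq \Id_A \otimes \eta^{P}_B$.

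The key ingredient for this is the standard operator inequality $X_{AB} \leq d_A\, \Id_A \otimes \tr_A(X)$ for any PSD $X$ on $AB$. This follows from a Cauchy-Schwarz argument on the Schmidt decompositions of the spectral eigenvectors of $X$: for a unit vector $\ket{v}_{AB} = \sum_i \sqrt{\lambda_i}\ket{i}_A\ket{i}_B$ of Schmidt rank $r \leq d_A$, one verifies $\ket{v}\bra{v} \leq d_A\,\Id_A \otimes \tr_A\ket{v}\bra{v}$ by checking the $r \times r$ Schmidt block, where it reduces to $\sum_i \lambda_i/(d_A \lambda_i) = r/d_A \leq 1$. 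Applying this to $X = P_{AB}$ yields $P_{AB} \leq \Id_A \otimes d_A P_B$ with $\tr(d_A P_B) = d_A \tr(P) \leq \epsilon d_A$. Combining with the bound for $\rho$ gives $\sigma_{AB} \leq \Id_A \otimes (\eta^{\rho}_B + d_A P_B)$ and hence the additive estimate
$$ 2^{-H_{\min}(A|B)_{\sigma}} \leq 2^{-H_{\min}(A|B)_{\rho}} + \epsilon d_A. $$

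To convert this into the logarithmic bound of the theorem, divide by $2^{-H_{\min}(A|B)_{\rho}}$ and use the easy upper bound $H_{\min}(A|B)_{\rho} \leq \log d_A$, which follows by taking $\tr_A$ of the SDP constraint $\rho_{AB} \leq \Id_A \otimes \eta_B$ to get $\rho_B \leq d_A \eta_B$, and hence $\tr(\eta_B) \geq 1/d_A$. This yields
$$ 2^{H_{\min}(A|B)_{\rho} - H_{\min}(A|B)_{\sigma}} \leq 1 + \epsilon d_A \cdot 2^{H_{\min}(A|B)_{\rho}} \leq 1 + \epsilon d_A^{2}. $$
Taking logarithms and exchanging the roles of $\rho$ and $\sigma$ gives the two-sided bound. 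There is no serious obstacle here: the entire proof is essentially a one-line SDP perturbation estimate, and the only piece requiring any care is the sublemma $X_{AB} \leq d_A \Id_A \otimes \tr_A X$, which can be inserted in a few lines if not quoted from the literature.
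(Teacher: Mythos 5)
Your proposal is correct and follows essentially the same route as the paper: both reduce the claim to the additive perturbation $2^{-H_{\min}(A|B)_{\sigma}} \leq 2^{-H_{\min}(A|B)_{\rho}} + \epsilon d_A$ via the positive/negative decomposition of the difference of the states, and then convert to the stated bound using $H_{\min}(A|B)_{\rho} \leq \log d_A$. The only cosmetic difference is that where the paper invokes $H_{\min}(A|B)_{Q} \geq -\log d_A$ to bound the positive part, you prove the equivalent operator inequality $X_{AB} \leq d_A\, \Id_A \otimes \tr_A(X)$ directly, which is a fine (and correct) way to make that step self-contained.
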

\begin{proof}
    Once again, since the upper bound in the theorem is increasing in $\epsilon$, we can simply prove the bound for states $\rho_{AB}$ and $\sigma_{AB}$ such that $\frac{1}{2}\norm{\rho_{AB} - \sigma_{AB}}_1 = \epsilon$. We also assume that $\epsilon>0$. Then, following the argument in Theorem \ref{th:AFW_for_Renyi_leq_1}, we can write $\rho_{AB} - \sigma_{AB} = \epsilon P_{AB} - \epsilon Q_{AB}$ for density operators $P_{AB}$ and $Q_{AB}$. We will now use the fact that $\sigma_{AB} \leq \rho_{AB} + \epsilon Q_{AB}$. Using the definition of $H_{\min}(A|B)_\rho$, we know that there exists a state $\eta_B$ such that
    \begin{align*}
        \rho_{AB} \leq 2^{-H_{\min}(A|B)_\rho} \Id_A \otimes \eta_B.
    \end{align*}
    Also, since $H_{\min}(A|B)_Q \geq -\log(d_A)$, we have
    \begin{align*}
        Q_{AB} & \leq 2^{-H_{\min}(A|B)_Q} \Id_A \otimes \eta'_B \\
        & \leq d_A \Id_A \otimes \eta'_B
    \end{align*}
    for some state $\eta'_B$. Thus, we have
    \begin{align*}
        \sigma_{AB} & \leq \rho_{AB} + \epsilon Q_{AB} \\
        & \leq 2^{-H_{\min}(A|B)_\rho} \Id_A \otimes \eta_B + \epsilon d_A \Id_A \otimes \eta'_B \\
        & \leq (2^{-H_{\min}(A|B)_\rho} + \epsilon d_A) \Id_A \otimes \rndBrk{\frac{2^{-H_{\min}(A|B)_\rho}  \eta_B + \epsilon d_A \eta'_B}{2^{-H_{\min}(A|B)_\rho} + \epsilon d_A}}
    \end{align*}
    which implies that 
    \begin{align*}
        & 2^{-H_{\min}(A|B)_{\sigma}} \leq 2^{-H_{\min}(A|B)_{\rho}} + \epsilon d_A\\
        &\Rightarrow 2^{H_{\min}(A|B)_{\rho} - H_{\min}(A|B)_{\sigma}} \leq 1 + \epsilon d_A 2^{H_{\min}(A|B)_{\rho}}\\
        &\Rightarrow H_{\min}(A|B)_{\rho} - H_{\min}(A|B)_{\sigma} \leq \log\left( 1 + \epsilon d_A^2 \right).
    \end{align*}
    A similar bound can be proven in the opposite direction if we follow the above procedure with $\rho_{AB}$ and $\sigma_{AB}$ interchanged.
\end{proof}
\end{sloppypar}

\bibliographystyle{unsrt}
\bibliography{bib}

\end{document}